\theoremstyle{plain}
\newtheorem{theorem}{Theorem}[section]
\theoremstyle{plain}
\newtheorem{corollary}[theorem]{Corollary}
\theoremstyle{definition}
\newtheorem{definition}{Definition}[section]
\newtheorem{assumption}{Assumption}
\newtheorem{example}{Example}[section]
\numberwithin{equation}{section}
\title[A note on the representation of BSDE-based dynamic risk measures]{A note on representation of BSDE-based dynamic risk measures and dynamic capital allocations. }
\date{\today}
\begin{document}

\author{Lesedi Mabitsela, Calisto Guambe and Rodwell Kufakunesu}
\address{Department of Mathematics and Applied Mathematics, University of Pretoria, 0002, South Africa}
\address{Department of Mathematics and Informatics, Eduardo Mondlane University, 257, Mozambique}

\email{rodwell.kufakunesu@up.ac.za, calistoguambe@yahoo.com.br, lesedi.mabitsela@up.ac.za}

%
%

\keywords{
 Dynamic risk capital allocation, dynamic risk measure, quadratic-exponential BSDE, dynamic entropic risk measure}

\begin{abstract}
  In this paper, we provide a representation theorem for dynamic capital allocation under It{\^o}-L{\'e}vy model. We consider the representation of dynamic risk measures defined under Backward Stochastic Differential Equations (BSDE) with generators that grow quadratic-exponentially in the control variables. Dynamic capital allocation is derived from the differentiability of BSDEs with jumps. The results are illustrated by deriving a capital allocation representation for dynamic entropic risk measure and static coherent risk measure.
\end{abstract}

\maketitle
\section{Introduction}
Risk measurement is the process of quantifying uncertainty in the future value of a financial position. Risk describes the changes in the future value of a position, due to uncertain events. The functionals used to quantify risk are known as coherent and convex risk measures, as they assign random variables to real numbers. Artzner et al. \cite{artzner1999} proposed coherent risk measure and described it as a function that satisfies four properties: translation invariance, monotonicity, sub-additivity and positive Homogeneity. Delbaen \cite{delbaen2002} defined coherent risk measures in the general probability space. F{\"o}llmer and Schied \cite{follmer2002} and independently Frittelli and Rosazza Gianin \cite{frittelli2002}, extended the work of Artzner et al. \cite{artzner1999} by introducing the concept of convex risk measures. A convex risk measure takes into consideration that the risk of a position may increase in a nonlinear way as a position multiplies by a large factor. Hence, the properties of positive homogeneity and subadditivity are relaxed \cite{follmer2002}.\\

In the abovementioned papers, the authors consider risk measure in a single-period setting. The ideal situation is to measure the risk of a financial position continuously throughout the investment period. Various authors have extended the concept of static risk measure to dynamic risk measure. Peng \cite{Peng1997} introduced $g$-expectations as nonlinear expectations based on BSDEs. Rosazza Gianin \cite{Rosazza2006} showed that conditional $g$-expectations define dynamic risk measures under diffusion BSDE (see also \cite{barrieu2007}, \cite{frittelli2002}, \cite{peng2004}). Jiang \cite{jiang2008}  proved that $g$-expectation satisfies the translation invariance property if and only if $g$ is independent of $y$ and $g$-expectation is convex if and only if $g$ is independent of $y$ and $g$ is convex with respect to $z$. In \cite{quenez2013}, Quenez and Sulem study properties of dynamic risk measures based on BSDEs with jumps (see also \cite{oksendal2015} for applications). An extension of quadratic BSDE to jumps is done by Karoui et al. in \cite{karoui2016} and they include an application to entropic risk measures. There are two representations of risk measures that are mostly applied in literature, namely scenario-based representation and the representation using backwards stochastic differential equations. \\

Risk measures are used to determine the amount required to hold as a buffer against unexpected losses for a portfolio. Risk measures can be further used to measure the risk contribution of a subportfolio in a overall portfolio (see for example \cite{cherny2009}, \cite{buch2008}, \cite{denault2001}, \cite{kalkbrener2005} and \cite{tasche2004}). Capital allocation is the problem of measuring the risk contribution of subportfolio in a portfolio to the overall portfolio risk. The capital allocation methods that are mostly studied are Shapley method, Aumann Shapley and Euler's method also known as the gradient allocation method. Denault in \cite{denault2001} provided the properties of coherent capital allocation. These are coherent capital allocation properties are ``no undercut", symmetry and riskless allocation, which together justify the gradient allocation principal. Denault showed that the Aumman Shapley value is coherent and a practical approach to capital allocation \cite{denault2001}. Kalkbrener in \cite{kalkbrener2005} further provided properties for gradient allocation principle, and shows that the properties are satisfied if and only if the risk measure is positive homogeneous and sub-additive. The gradient allocation properties provide by Denault in \cite{denault2001} are shown to be equivalent to risk measure axioms of positive homogeneity, sub-additivity and translation invariance respectively (see Buch in \cite{buch2008}). The Euler's or gradient allocation method is analysed by Tasche in \cite{tasche2007}. \\

Tasche in \cite{tasche2004} showed that if the risk measure if smooth, then the partial derivative of the risk measure with respect to the underlying asset is the unique gradient allocation principle.  Kromer and Overbeck \cite{kromer2014} used the results of Ankrichner et al. \cite{ankirchner2007} for the differentiability of the BSDE. In this paper, we use the results from Fuji \cite{fujii2017} for the differentiability of quadratic-growth BSDE with jumps to derive the representation of BSDE-based dynamic gradient allocation.\\

This paper is motivated by the representation of BSDE-based dynamic capital allocation from Kromer and Overbeck \cite{kromer2014} and extend it to BSDEs with jumps. The jump-diffusion model can help to explain the extreme movements in a risky asset. Rong in \cite{rong2006} explains the jump term as a random extreme move in financial returns caused by, for example, the announcement of some important economic policy to this financial market or the performance of an important decision made by some big company.\\

The remainder of the paper is organised as follows. In Section 2, we present the notations and define concepts that will be used throughout the paper. Section 3, we derive the representation of dynamic risk capital allocation based on BSDE with jumps. From dynamic risk capital allocation results we derive the representation of BSDE based dynamic convex and coherent risk measures. We conclude in Section 4 with applications of our results to the entropic risk measures.\\

\section{Preliminaries}
In this section we introduce the main concepts and notations to be used throughout the paper. Consider $(\Omega, \mathcal{F}, \mathbb{P})$ to be the filtered probability space with the filtration $\{\mathcal{F}_{t}\}_{t\in[0,T]}$. Let $L^{p}(\Omega, \mathcal{F}, \mathbb{P})$, $1\leq p < \infty$ be the space of all real-valued $\mathcal{F}_t$-measurable, $p$-integrable random variables. Let $L^{\infty}(\Omega, \mathcal{F}, \mathbb{P})$ be the space of essentially bounded random variables with norm $||X||_{\infty}=ess sup|X|$ and $L^0(\Omega,\mathcal{F}_t,\mathbb{P})$ the space of all $\mathcal{F}_t$-measurable random variables. Denote by $\mathcal{X}\subset L^{2}(\Omega, \mathcal{F}, \mathbb{P})$ the space of financial positions $X$.\\

\begin{definition} (see \cite{artzner1999}, \cite{Rosazza2006})
	A mapping $\rho : \mathcal{X} \rightarrow \mathbb{R}$ is a static risk measure if, for any $X$ and $Y$ in $\mathcal{X}$, it satisfies the following axioms:
	\begin{itemize}
		\item [1)] Monotonicity: $\rho(X) \leq \rho(Y)$, $\forall$  $Y\leq X$;
		\item [2)] Translation invariance: $\rho(X+m)=\rho(X)-m$, $m\in\mathbb{R}$;
		\item [3)] Subadditivity: $\rho(X+Y)\leq \rho(X)+\rho(Y)$;
		\item [4)] Positive homogeneity $\rho(k X)=k\rho(X)$, $k\geq 0$;
		\item [5)] Convexity: $\rho(\lambda X +(1-\lambda)Y)\leq \lambda
		\rho(X)+(1-\lambda)\rho(Y)$, $\lambda \in (0,1)$.
	\end{itemize}
    The functional $\rho(X)$ quantifies the risk of a financial position $X\in \mathcal{X}$. A financial position $X\leq 0$ is acceptable if $\rho(X)\geq 0$, and $\rho(X)$ represents the capital amount that an investor can withdraw without changing the acceptability of $X$. Monotonicity implies that $\rho$ is nonincreasing with respect to $X \in \mathcal{X}$. The financial meaning is that if a financial position $X$ is always higher than $Y$, then the capital required to support $X$ should be less than capital required for $Y$. Subadditivity allows for risk to be reduced by diversification since the risk of a portfolio $X+Y$ is bounded by the sum of individual risk of position $X$ and $Y$.
	Translation invariance states that if you add a certain amount $m$ to the initial investment position, then the risk of that investment will decrease by that amount $m$. Note that, if a position $X$ is not acceptable, then adding an amount $\rho(X)$ to it will make the position acceptable, i.e. $\rho(X+\rho(X))=\rho(X)-\rho(X)=0$. Positive homogeneity tells us that the capital required to support $k$ identical positions is equal to $k$ times the capital required for one position.\\
	
 A convex risk measure $\rho$ whose domain includes $\mathcal{X}$ such that $\rho(X)<\infty$ where $X\in \mathcal{X}$, satisfies property 5) see \cite{follmer2002} and \cite{frittelli2002}, while a coherent risk measure satisfies properties 1) to 4) see \cite{artzner1999} and \cite{delbaen2002}. The position $X$ is acceptable when $\rho(X)\leq 0$, and unacceptable otherwise \cite{artzner1999}. We state from (Rosazza Gianin \cite{Rosazza2006}) the following definition of a dynamic risk measure:
\end{definition}
\begin{definition}
	A mapping $(\rho_t)_{t\in[0,T]}$ is a dynamic risk measure for all $X, Y \in \mathcal{X}$ and $t\in[0,T]$, if the following properties are satisfied:
	\begin{itemize}
		\item [(a)] $\rho_t:L^p(\mathcal{F}_T)\rightarrow L^0(\Omega,\mathcal{F}_t,\mathbb{P})$.
		\item [(b)] $\rho_0$ is a static risk measure.
		\item [(c)] $\rho_T(X)=-X$ for all $X\in \mathcal{X}$.
	\end{itemize}
\end{definition}
A dynamic risk measure is called coherent if it satisfies, positive homogeneity, monotonicity, translation invariance and subadditivity. A convex dynamic risk measure satisfies the convexity property and $\rho_t(0)=0$ for any $t\in[0,T]$ (see \cite{Rosazza2006}).\\

Let $X_1, X_2,\ldots,X_n \in \mathbb{R}$ be financial positions, with the corresponding risk contribution to the overall portfolio denoted by $\rho(X_i|X)$, $i=1,2,\ldots,n.$ Consider a portfolio $X\in \mathcal {X}$, consisting of $X_i$, $i=1,2,\ldots,n$ subportfolios, that is
\[
	X=\sum_{i=1}^{n}X_i.
\]

The portfolio risk is given by $\rho(X)$. The capital allocation problem is allocating the overall risk $\rho(X)$ of the portfolio $X$ to the individual subportfolios in the portfolio. That is, we require a mapping such that
\begin{equation}\label{fullAllocationProperty}
	\rho(X)=\sum_{i=1}^{n}\rho(X_i|X).
\end{equation}
Such a relation is called the \textit{full allocation property}, since the overall portfolio risk is fully allocated to the individual subportfolios in the portfolio (see \cite{tasche2007}).
\begin{definition}
	Let $\rho$ be the risk measure that is continuously differentiable at $X$ in its domain, then $\rho(X_i|X)$ is uniquely determined by
	\begin{eqnarray}
		\rho(X_i|X)&=&\nabla_{X_i}\rho(X)\nonumber\\
		&=&\lim\limits_{h\rightarrow 0}\frac{\rho(X+hX_i)-\rho(X)}{h}\nonumber\\
		&=&\frac{d}{dh}\rho(X+hX_i)\bigg|_{h=0}.\label{gradientAllocation}
	\end{eqnarray}
\end{definition}
Note that the gradient of a continuous differentiable risk measure $\rho(X_i|X)$ is the unique allocation principle (see proposition 2.1 in Tasche \cite{tasche2007}). Equation \eqref{gradientAllocation} defines static gradient allocation principle, which is the G\^{a}teaux-derivative of $X$ in the direction of $X_i$, for $i=1,2,\ldots,n$. The static Aumann-Shapley allocation is defined by:
\begin{equation}\label{Aumann}
		\overline{\nabla_{X_i}\rho(X)}=\int_{0}^{1}\nabla_{X_i}\rho(\beta X)d\beta, \qquad i=1,2,\ldots,n.
\end{equation}
If the risk measure $\rho$ is positive homogeneous, then the Aumann-Shapley allocation reduces to the gradient allocation principle \eqref{gradientAllocation} as indicated by Denault \cite{denault2001}. For the Aumann-Shapley, we do not require the risk measure to be positively homogeneous to satisfies full allocation property. While the gradient allocation does need the risk measure to be positive homogeneous to satisfy the full allocation property. According to Kromer and Overbeck \cite{kromer2014}, the Aumann-Shapley and G\^ateaux-derivative can be jointly used to risk measures that do not satisfy the positive homogeneity property. \\

Let $W=\{W(t),\mathcal{F}(t); 0\leq t \leq T\}$ be the one-dimensional standard Brownian motion, defined on a probability space $(\Omega^{W},\mathcal{F}^{W},\mathbb{P}^{W})$. Furthermore, let $\tilde{N}(dt,d\zeta):=N(dt,d\zeta)-\nu(d\zeta)dt$ be the independent compensated Poisson random measure defined on the probability space $(\Omega^{\tilde{N}},\mathcal{F}^{\tilde{N}},\mathbb{P}^{\tilde{N}})$, with $\nu$ on $\mathbb{R}_0=\mathbb{R}\backslash \{0\}$ as the L\'evy measure of $N$. The Poisson random measure $N$, counts the number of jumps of size $\Delta X$ that occur on or before time $t$. The Brownian and Poisson probability space $(\Omega, \mathcal{F}, \mathbb{P})$ is the product of the described probability spaces $(\Omega^{W}\otimes\Omega^{\tilde{N}},\mathcal{F}^{W}\otimes\mathcal{F}^{\tilde{N}},\mathbb{P}^{W}\times\mathbb{P}^{\tilde{N}})$. We consider the complete filtered probability space $(\Omega, \mathcal{F}, (\mathcal{F}_t)_{t\in[0,T]}, \mathbb{P})$ and the following spaces of random and stochastic processes:
\begin{itemize}
	\item $L^2(\mathcal{F}_T)$ is the space of $\mathcal{F}_T$-measurable, square integrable random variable $\xi$.
	\item $\mathbb{S}^2(\mathbb{R})$ is the space of $\mathbb{R}$-valued $Y:\Omega\times [0,T]$ c\`{a}dl\`{a}g processes such that
	\[\mathbb{E}[\sup_{t\in[0,T]}|Y(t)|^2]<\infty.\]
	\item $\mathbb{H}_W^2(\mathbb{R})$ is the space of predictable processes $Z:\Omega\times [0,T]\rightarrow\mathbb{R}$ such that
	\[
	\mathbb{E}[\int_{0}^{T}|Z(s)|^2ds]<\infty.
	\]
	\item $\mathbb{H}^2_N(\mathbb{R})$ denotes the space of predictable processes $\Upsilon:\Omega\times[0,T]\times\mathbb{R}\rightarrow \mathbb{R}$, satisfying
	\[
	\mathbb{E}\big[\int_{0}^{T}\int_{\mathbb{R}_0}|\Upsilon(t,\zeta)|^2\nu(d\zeta)dt\big]<\infty.
	\]
\end{itemize}

Let $X(t)$ be a L\'evy process with a semi-martingale decomposition $X(t)=X(0)+M(t)-V(t)$, where $V$ is the continuous finite variation drift defined by
\[
V(t)=\int_{0}^{t}\bigg[\mu +\frac{\sigma^2}{2}+\int_{|\zeta|<1}(e^{\Upsilon(s,\zeta)}-1-\Upsilon(s,\zeta))\nu(d\zeta)\bigg]ds
\]
 and $M$ is the local martingale given by
\[
M(t)=M(0)+\int_{0}^{t}Z(s)dW(s)+\int_{0}^{t}\int_{\mathbb{R}_0}(e^{\Upsilon(s,\zeta)}-1)\tilde{N}(ds,d\zeta).
\]

 Given a local martingale $M(t)$, $M(0)=0$, then an adapted process $\Gamma(t)$ that has a stochastic differential equation of the form $d\Gamma(t)=\Gamma(t)dM(t)$, $\Gamma(0)=1$ is the stochastic exponential of $M(t)$, denoted by $\Gamma(t) = \mathcal{E}(M)(t)$ and defined as
$$
\mathcal{E}(M(t))=\exp\{M(t)-\frac{1}{2}\langle M^c(t)\rangle\}\times\prod_{0\leq s\leq t}(1+\Delta M^J(s))e^{-\Delta M^J(s)}\,,
$$
where $\langle M \rangle$ denotes a quadratic variation of a process $M$ and $M^c,\,M^J$ are continuous and discontinuous part of $M$, respectively. Moreover, we introduce the notion of martingales of bounded mean oscillation ({\it BMO}-martingales) for jump-diffusion processes as in Morlais \cite{morlais2009}. A local Martingale $M$ is in the class of {\it BMO}-martingales if there exists a constant $K>0$, such that, for all $\mathcal{F}$-stopping times $\mathcal{T}$,
$$
ess\sup_{\Omega}\mathbb{E}[\langle M(T)\rangle-\langle M({\mathcal{T}}) \rangle\mid\mathcal{F}_{\mathcal{T}}]\leq K^2 \ \ \ \ \ {\rm{and}} \ \ \ \ ess\sup_{\Omega}|\Delta M({\mathcal{T}})|\leq K^2\,.
$$
For the diffusion case, the $BMO$-martingale property follows from the first condition, whilst in a jump-diffusion case, we need to ensure the boundedness of the jumps of the local martingale $M$. The Kazamaki's criterion states that if $M$ is a BMO martingale satisfying $\Delta M(t)\geq -1+\delta$, $\mathbb{P}$-a.s, for $0<\delta<\infty$, and for all $t$ then $\mathcal{E}(M)$ is a true martingale \cite{morlais2009}.\\

Let $D_t$ and $D_{t,\zeta}$ be the Malliavin derivatives with respect to $W$ and $\tilde{N}(dt,d\zeta)$ respectively. A random variable $F$ is Malliavin differentiable if and only if $F\in\mathbb{D}^{1,2}\subset L^2(\mathbb{P})$. The respective norm is defined as follows
$$
\|F\|^2_{1,2}:=\mathbb{E}\Bigl[|F|^2+\sum_{i=1}^d\int_0^T|D^i_sF|^2ds+ \sum_{i=1}^k\int_0^T\int_{\mathbb{R}_0}|D^i_{s,\zeta}F|^2\zeta^2\nu_i(d\zeta)ds\Bigl]\,.
$$
We consider a quadratic exponential BSDE, defined as in Karoui et al. \cite{karoui2016} for $t\in[0,T]$ of the form
\begin{equation}\label{BSDE}
Y(t)=\xi+\int_{t}^{T}g(s,Y(s),Z(s),\Upsilon(s,\zeta))ds-\int_{t}^{T}Z(s)dW(s)-\int_{t}^{T}\int_{\mathbb{R}_0}\Upsilon(s,\zeta)\tilde{N}(ds,d\zeta),
\end{equation}
where $\xi:\Omega\rightarrow \mathbb{R}$ and $g:\Omega\times [0,T]\times\mathbb{R}\times\mathbb{R}^d\times\mathbb{R}^k\rightarrow\mathbb{R}$.  Dynamic risk measures are usually constructed using BSDEs, the following definition from Delong \cite{delong201} defines risk measures induced by BSDEs.
\begin{definition}
Let $\rho_t^g(\xi)=Y^{\xi}(t)$, $t\in[0,T]$. Then $\rho$ is monotone, time-consistent dynamic risk measure. In addition,
	\begin{itemize}
		\item [(a)] if g is sublinear in $(z, \Upsilon)$ and independent of $y$, then $\rho$ is a coherent dynamic risk measure.
		\item [(b)] If $g$ is convex in $(y, z, \Upsilon)$, then $\rho$ is a convex dynamic risk measure
	\end{itemize}
\end{definition}
The component $Y^{\xi}$ is the solution of the BSDE \eqref{BSDE}. The driver $g$ plays an essential role in the construction of risk measures induced by BSDE. For this paper, we focus on quadratic-exponential BSDEs. For the existence and uniqueness of such BSDEs, the driver and terminal condition are subject to the following assumptions (see \cite{briand2006}, \cite{karoui2016}, \cite{fujii2017}, \cite{royer2006}, \cite{delong201}).

\begin{assumption}\label{assumption3.1}
	~\\
\begin{itemize}
	\item [(i)] The map $(\omega, t)\mapsto g(\omega,t,\cdot)$ is $\mathcal{F}_t$-progressively measurable. For $(y,z,\Upsilon)\in \mathbb{R}\times\mathbb{R}\times\mathbb{R}$, there exist two constants $\beta\geq0$ and $\alpha>0$ and a positive $\mathcal{F}_t$-progressively  measurable process $(\ell_t,t\in[0,T])$ such that
	\begin{eqnarray}
	-\ell_t-\beta|y|-\frac{\alpha}{2}|z|^2-\int_{\mathbb{R}_0}j_{\alpha}(-\Upsilon(\zeta))\nu(d\zeta)\leq g(t,y,z,\zeta)\nonumber\\
	\leq \ell_t+\beta|y|+\frac{\alpha}{2}|z|^2+\int_{\mathbb{R}_0}j_{\alpha}(\Upsilon(\zeta))\nu(d\zeta)
	\end{eqnarray}
	$dt\otimes d\mathbb{P}$-a.e. $(\omega,t)\in\Omega\times[0,T]$, were $j_{\alpha}:=e^{\alpha \Upsilon}-1-\alpha \Upsilon$.
	\item [(ii)] $|\xi|$, $(\ell_t,t\in[0,T])$ are essentially bounded i.e. $||\xi||_{\infty}, ||\ell||_{L^{\infty}}<\infty$.
\end{itemize}
\end{assumption}
\begin{assumption}\label{assumption3.2}
	For $M>0$ and $(y,z,\Upsilon), (y',z',\Upsilon')\in\mathbb{R}\times\mathbb{R}\times\mathbb{R}$ satisfying
	\[
	|y|,|y'|,||\Upsilon||_{L^{\infty}},||\Upsilon'||_{L^{\infty}}\leq M,
	\]
	there exists some positive constant $K_M$ possibly depending on $M$ such that
	\begin{eqnarray}
	|g(y,z,\Upsilon)-g(y',z',\Upsilon')|\leq K_M(|y-y'|+||\Upsilon-\Upsilon'||_{L^{2}})\nonumber\\
	+K_M(1+|z|+|z'|+||\Upsilon||_{L^{2}}+||\Upsilon'||_{L^{2}})|z-z'|
	\end{eqnarray}
	$dt\otimes d\mathbb{P}$ a.e. $(\omega,t)\in\Omega\times[0,T]$.
\end{assumption}

\begin{assumption}\label{assumption4.1}
	For all $t\in[0,T]$, $M>0$ and $y\in\mathbb{R}, z\in\mathbb{R},\Upsilon,\Upsilon'\in L^2$ with $|y|,||\Upsilon||_{L^{\infty}},||\Upsilon'||_{L^{\infty}}\leq M
	$, there exists a process $\Gamma^{y,z,\Upsilon,\Upsilon'}$ satisfying $dt\otimes d\mathbb{P}$-a.e.
	\begin{eqnarray}
	g(y,z,\Upsilon)-g(y',z',\Upsilon')\leq \int_{\mathbb{R}_0}\Gamma^{y,z,\Upsilon,\Upsilon'}(\zeta)[\Upsilon(\zeta)-\Upsilon'(\zeta)]\nu(d\zeta)
	\end{eqnarray}
	and $C^1_M(1\wedge|\zeta|)\leq\Gamma^{y,z,\Upsilon,\Upsilon'}_t(\zeta)\leq C^2_M(1\wedge|\zeta|)$ with two constants $C^1_M>-1$ and $ C^2_M>0$ depend on $M$.
\end{assumption}

Fujii and Takahashi [\cite{fujii2017} in Theorem 3.1, Assumptions \ref{assumption3.1}, \ref{assumption3.2} and \ref{assumption4.1}] proved the existence of a unique bounded solution $(Y,Z,\Upsilon)\in \mathbb{S}^2\times\mathbb{H}_W^2\times \mathbb{H}_N^2$ of the BSDE \eqref{BSDE}. Moreover, $Z$ belongs to the set of progressively measurable real valued functions denoted by $\mathbb{H}^2_{BMO(W)}$ satisfying
\[
\bigg|\bigg|\int_{0} ^.Z(s)\bigg|\bigg|^2_{BMO(W)}=ess\sup\mathbb{E}\bigg[\int_{\tau}^T|Z(s)|^2ds\big|\mathcal{F}_{\tau}\bigg]\leq K^2 \quad \mathbb{P}-a.s.
\]
and $\Upsilon$ belongs to the set of predictable processes, denoted by $\mathbb{H}^2_{BMO(N)}$  satisfying
\[
\bigg|\bigg|\int_{0}^. \int_{\mathbb{R}_0}\Upsilon(\zeta)\tilde{N}(ds,d\zeta)\bigg|\bigg|^2_{BMO(N)}=ess\sup\mathbb{E}\big[\int_{\tau}^T\int_{\mathbb{R}_0}|\Upsilon(s,\zeta)|^2\nu(d\zeta)ds\big|\mathcal{F}_t\big]+
|\Delta M({\mathcal{T}})|\leq K^2.
\]
\newline
For this paper, we consider the terminal condition $\xi$ of the form $\xi(\beta)=\xi+\beta\eta$, where $\xi,\eta\in\mathbb{L}^{2}(\mathcal{F}_T)$. The generator $g$ is defined as follows
\begin{equation}\label{generator}
g(t,z,\Upsilon(t,\zeta))=\ell(t,z,\Upsilon)+\frac{1}{2}\alpha|z|^2+\frac{1}{\alpha}\int_{\mathbb{R}_0}(e^{\alpha \Upsilon}-1-\alpha \Upsilon)\nu(dz)
\end{equation}
and is a special case of the generator in Assumption 1 (i). To define the gradient allocation, we need the differentiability for BSDE. In the Brownian case, Kromer and Overbeck \cite{kromer2014} used classical differentiability results for BSDEs from Ankrichner et al. \cite{ankirchner2007}. For our purpose, we use the results of Fujii and Takahashi \cite{fujii2017}, who extended the work of Ankrichner et al. \cite{ankirchner2007} to the Malliavin's differentiability of the quadratic-exponential BSDE.\\

As in Fujii and Takahashi \cite{fujii2017}, consider the following quadratic-exponential BSDE:
\begin{eqnarray}\label{BSDequation}
  Y(t) &=& \xi(X_T)-\int_t^T Z(s)dW(s)-\int_t^T\int_{{\mathbb{R}_0}}\Upsilon^\beta(s,\zeta)\tilde{N}(ds,d\zeta)\nonumber \\
   && +\int_t^Tg\bigg(s,Y(s),Z(s),\int_{\mathbb{R}_0}p(\zeta)G(s,\Upsilon(s,\zeta))\nu(d\zeta)\bigg)ds\,.
 \end{eqnarray}
for $t\in[0,T]$ where $\xi:\Omega\rightarrow \mathbb{R}$, $g:\Omega\times [0,T]\times\mathbb{R}\times\mathbb{R}^d\times\mathbb{R}^k\rightarrow\mathbb{R}$, and $p^i:\mathbb{R}\rightarrow\mathbb{R}$, $G^i:[0,T]\times\mathbb{R}\rightarrow\mathbb{R}$ for each $i={1,\ldots,k}$. The driver $g\bigg(t,Y(t),Z(t),\int_{\mathbb{R}_0}p(\zeta)G(t,\Upsilon(t,\zeta))\nu(d\zeta)\bigg)$, satisfies Assumptions \eqref{assumption3.1} and \eqref{assumption4.1}, where the last arguments denotes a $k$-dimensional vector whose $i$-th element is given by $\int_{\mathbb{R}_0}p^i(\zeta)G^i(s,\Upsilon^i(s,\zeta))\nu^i(d\zeta)$.
Fujii and Takahashi \cite{fujii2017} assume that for every $i\in \{1,\ldots,k\}$, the functions $p^i$ and $G^i(t,v)$ are continuous, with  $p^i$ satisfying $\int_{\mathbb{R}_0}|p^i(\zeta)|^2\nu^id(\zeta)<\infty$. The function $G^i(t,v)$ is continuous in both arguments and one-time continuously differentiable with respect to $v$.\\

\begin{assumption} (Fujii and Takahashi \cite{fujii2017})
	Let $u_t=\int_{\mathbb{R}_0}p(\zeta)G(t,\Upsilon(\zeta))\nu(d\zeta)$ and $u'_t=\int_{\mathbb{R}_0}p(\zeta)G(t,\Upsilon'(\zeta))\nu(d\zeta)$.
	\item [(i)] The terminal value is Malliavin differentiable; $\xi\in\mathbb{D}^{1,2}$.
	\item [(ii)] For
	$M>0$ and $(y,z,\Upsilon)\in\mathbb{R}\times\mathbb{R}^d\times\mathbb{R}^k$ satisfying $|y|,||\Upsilon||_{L^{\infty}}\leq M$, the driver $g(t,y,z,u_t),\, t\in[0,T]$ belongs to $L^{1,2}(\mathbb{R})$ and its Malliavin derivatives is denoted by $D_{t,\zeta} g(t,y,z,u_t)$. Furthermore, the driver $g$ is continuously differentiable with respect to its state variables.
	\item [(iii)] For $M>0$ and $(y,z,\Upsilon), (y',z',\Upsilon')\in\mathbb{R}\times\mathbb{R}^d\times\mathbb{R}^k$, satisfying $|y|,|y'|,||\Upsilon||_{L^{\infty}}, ||\Upsilon'||_{L^{\infty}}\leq M$, the Malliavin derivative of the driver satisfies the following local Lipschitz conditions;
	\[
		|D^i_{t}g(t,y,z,u_t)-D^i_{t}g(t,y',z',u'_t)|\leq K^{M,i}_{t}(|y-y'|+|u_t-u'_t|+(1+|z|+|z'|+|u_t|+|u'_t|)|z-z'|)
	\]
	for $dt$-a.e. $t\in[0,T]$ with $i\in 1,\ldots,d$, and
		\[
		|D^i_{t,\zeta}g(t,y,z,u_t)-D^i_{t}g(t,y',z',u'_t)|\leq K^{M,i}_{t,\zeta}(|y-y'|+|u_t-u'_t|+(1+|z|+|z'|+|u_t|+|u'_t|)|z-z'|)
		\]
	for $dt$-a.e. $t\in[0,T]$ with $i\in 1,\ldots,k$. For ever $M>0$ and $(t,\zeta)$, $(K^{M,i}_{s}(t), t\in[0,T])_{i\in1,\ldots,d}$ and $(K^{M,i}_{s,\zeta}(t), t\in[0,T])_{i\in1,\ldots,d}$ are $\mathbb{R}_+$-valued $\mathcal{F}_t$-progressively measurable processes.	
	\item [(iv)] There exists some positive constant $r\geq 2$ such that
	\[
	\int_{[0,T]\times \mathbb{R}^k}\bigg(\mathbb{E}\big[|D_{t,\zeta}\xi|^{rq}+\big(\int_0^T|D_{t,\zeta} g(s,0)|ds\big)^{rq}+||K^M||^{2rq}\big]\bigg)^{\frac{1}{q}}\tilde{N}(ds,d\zeta)<\infty
	\]
	hold for $\forall q\geq 1$ and $\forall M>0$.
\end{assumption}

Fujii and Takahashi (\cite{fujii2017}, Theorem 5.1), proved that under the above assumptions the solution $(Y,Z,\Upsilon)\in\mathbb{S}^2\times\mathbb{H}^2_{BMO(W)}\times\mathbb{H}^2_{BMO(N)}$ of the BSDE \eqref{BSDequation} is Malliavin differentiable and it is the unique solution to the BSDE
\begin{eqnarray}
D_tY(t)&=&\partial_x\xi(X(T))D_tX(T)-\int_t^TD_{t}Z(s)dW(s)-\int_t^T\int_{\mathbb{R}_0}D_t\Upsilon(s,\zeta)\tilde{N}(ds,d\zeta)\nonumber\\
&&+\int_t^T\bigg[D_tg(s,\Theta)+\partial_yg(s,\Theta)D_tY(s)+\partial_zg(s,\Theta)D_tZ(s)\nonumber\\
&&+\partial_ug(s,\Theta)\int_{\mathbb{R}_0}p(\zeta)\partial_vG(s,\Upsilon(\zeta))D_t(\Upsilon(\zeta))\nu(d\zeta)\bigg]ds,
\end{eqnarray}
where $\Theta:=(Y(t),Z(t),\int_{\mathbb{R}_0}p(\zeta)G(s,\Upsilon(s,\zeta))\nu(d\zeta))$, $t\in[0,T]$. The solution also satisfies $\int_0^T||D_tY,D_tZ,D_t\Upsilon||^rds <\infty$.\\

Let $\xi=\sum_{i=1}^n\eta_i$, then the BSDE version of the dynamic gradient allocation is defined as the directional derivative as follows:
$$
 D_tY(t)= \nabla_{\eta_i}Y(t):=\frac{d}{d\beta}\rho_t(\xi+\beta\eta_i)\Bigl|_{\beta=0}\,,
$$
where
\begin{eqnarray}\nonumber
  D_tY(t) &=& 
  -\eta_i- \int_t^TD_tZ(s)dW(s) -\int_t^T\int_{\mathbb{R}_0}D_t\Upsilon(s,\zeta)\tilde{N}(ds,d\zeta) \\ \label{gradientbsde}
   &&+\int_t^T\bigg[D_tg(s,\Theta)+\partial_zg(s,\Theta)D_tZ(s)\nonumber\\
   &&+\partial_ug(s,\Theta)\int_{\mathbb{R}_0}p(\zeta)\partial_{\upsilon}G(s,\Upsilon(\zeta))D_t\Upsilon(s,\zeta)\nu(d\zeta)\bigg]ds\,.
\end{eqnarray}
 We are now is a position to provide the main result on the representation of the dynamic risk capital allocations as a dynamic gradient allocation.\\

\section{Representation of dynamic risk capital allocations}
In this section, we derive the dynamic risk capital allocation induced from BSDEs with jumps. We also obtain the representation of BSDE based dynamic convex and dynamic coherent risk measures.  We follow the approach of Kromer and Overbeck \cite{kromer2014} in deriving the representation of capital allocation, BSDE based dynamic convex and coherent risk measures.

\begin{theorem}
Let $\xi,\eta_i\in\mathbb{L}^{\infty}(\mathcal{F}_{T})$, such that $\xi=\sum_{i=1}^n\eta_i$ for each $i=1,2,\ldots,n$ and $\nabla_{\eta_i}Y(t)$ exists. Suppose that $\partial_zg(t,\Theta)$ and $\partial_vg(s,\Theta)$ belong to BMO$(\mathbb{P})$. Then the dynamic gradient allocations can be represented by:
$$
\nabla_{\eta_i}Y(t)=\nabla_{\eta_i}\rho_t(\xi)=\mathbb{E}^{\mathbb{Q}^{\xi}}[-\eta_i\mid\mathcal{F}_t]\,, \ \ \ n=1,2,\ldots,n\,,
$$
where $\mathbb{Q}^{\xi}$ is given by
\begin{equation}\label{girsanovtype}
    \frac{d\mathbb{Q}^{\xi}}{d\mathbb{P}}:=\mathcal{E}\Bigl(\int_0^t\partial_zg(s,\Theta)dW +\int_0^t\int_{\mathbb{R}_0}\partial_vg(s,\Theta)\tilde{N}(ds,d\zeta)\Bigl)(t)\,.
\end{equation}

\end{theorem}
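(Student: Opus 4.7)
My strategy is to exploit the fact that \eqref{gradientbsde} is a linear BSDE in the triple $(\nabla_{\eta_i}Y,\nabla_{\eta_i}Z,\nabla_{\eta_i}\Upsilon)$, once $\partial_z g(s,\Theta)$ and $\partial_v g(s,\Theta)$ are frozen as given $\mathcal{F}_t$-adapted coefficient processes. Linear BSDEs of this form are solved by the classical route of removing the drift via an equivalent change of measure and reading off the solution as a conditional expectation of the terminal datum; the candidate change of measure is exactly $\mathbb{Q}^\xi$ given by \eqref{girsanovtype}.

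Concretely, I would first introduce the density process
$$
\Gamma(t):=\mathcal{E}\!\left(\int_0^{\cdot}\partial_z g(s,\Theta)\,dW(s)+\int_0^{\cdot}\!\int_{\mathbb{R}_0}\partial_v g(s,\Theta)\tilde{N}(ds,d\zeta)\right)(t)
$$
and verify that it is a true $\mathbb{P}$-martingale on $[0,T]$. This is where both BMO hypotheses are used: Kazamaki's criterion, recalled in the Preliminaries, guarantees that $\mathcal{E}(M)$ is a genuine martingale once $M$ is in $\mathrm{BMO}(\mathbb{P})$ and $\Delta M\geq -1+\delta$ for some $\delta\in(0,\infty)$. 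The Brownian contribution is handled directly by the BMO hypothesis on $\partial_z g$, while the jump contribution additionally requires the uniform lower bound $\partial_v g(s,\Theta)>-1+\delta$, which I would inherit from Assumption \ref{assumption4.1} through the constraint $C^1_M>-1$ on the comparison process. Once $\Gamma$ is a true martingale, I set $d\mathbb{Q}^\xi/d\mathbb{P}:=\Gamma(T)$ and invoke the Girsanov theorem for jump-diffusions to obtain
$$
W^{\mathbb{Q}}(t):=W(t)-\int_0^t\partial_z g(s,\Theta)\,ds,\qquad
\tilde{N}^{\mathbb{Q}}(ds,d\zeta):=\tilde{N}(ds,d\zeta)-\partial_v g(s,\Theta)\nu(d\zeta)ds
$$
as a $\mathbb{Q}^\xi$-Brownian motion and a $\mathbb{Q}^\xi$-compensated Poisson random measure, respectively.

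Substituting these into \eqref{gradientbsde} makes the drift contributions $\partial_z g\cdot\nabla_{\eta_i}Z\,ds$ and the corresponding jump term absorb into the $dW^{\mathbb{Q}}$ and $\tilde{N}^{\mathbb{Q}}$ integrals, so that under $\mathbb{Q}^\xi$ the BSDE reduces to the pure martingale equation
$$
\nabla_{\eta_i}Y(t)=-\eta_i-\int_t^T\nabla_{\eta_i}Z(s)\,dW^{\mathbb{Q}}(s)-\int_t^T\!\int_{\mathbb{R}_0}\nabla_{\eta_i}\Upsilon(s,\zeta)\,\tilde{N}^{\mathbb{Q}}(ds,d\zeta).
$$
Taking $\mathbb{E}^{\mathbb{Q}^\xi}[\cdot\mid\mathcal{F}_t]$ on both sides makes the two stochastic integrals vanish and produces exactly $\nabla_{\eta_i}Y(t)=\mathbb{E}^{\mathbb{Q}^\xi}[-\eta_i\mid\mathcal{F}_t]$.

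The main obstacle is not algebraic but analytic: rigorously justifying that $\Gamma$ is a genuine $\mathbb{P}$-martingale, and that the stochastic integrals above are true $\mathbb{Q}^\xi$-martingales rather than only local ones. In the pure-diffusion setting of Kromer-Overbeck both concerns collapse into the single BMO assumption on $\partial_z g$, but here the jump component forces one to secure the lower bound on $\partial_v g(s,\Theta)$ and to invoke the jump version of Kazamaki's criterion. A secondary technical point is the preservation of the BMO property of $\nabla_{\eta_i}Z$ and $\nabla_{\eta_i}\Upsilon$ under the change of measure, which is needed so that the conditional expectation genuinely annihilates the two stochastic integrals.
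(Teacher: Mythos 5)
Your overall strategy coincides with the paper's: both hinge on the observation that \eqref{gradientbsde} is a linear BSDE in $(\nabla_{\eta_i}Y,\nabla_{\eta_i}Z,\nabla_{\eta_i}\Upsilon)$ once the coefficients $\partial_z g(s,\Theta)$, $\partial_v g(s,\Theta)$ are frozen, and both remove the linear drift by the Girsanov change of measure \eqref{girsanovtype}. You run the argument forwards (rewrite the BSDE under $\mathbb{Q}^{\xi}$ and take $\mathbb{E}^{\mathbb{Q}^{\xi}}[\,\cdot\mid\mathcal{F}_t]$), whereas the paper runs it backwards: it defines the candidate $\Phi_i(t)=\mathbb{E}^{\mathbb{Q}^{\xi}}[-\eta_i\mid\mathcal{F}_t]$, expands it by the martingale representation theorem under $\mathbb{Q}^{\xi}$, converts back to $\mathbb{P}$, and identifies $\Phi_i$ with $\nabla_{\eta_i}Y$ by uniqueness of the solution to \eqref{gradientbsde}. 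These are two directions of the same computation. On the analytic side you are in fact more careful than the paper: you isolate the jump-specific lower bound $\partial_v g>-1+\delta$ needed for Kazamaki's criterion and you flag that the stochastic integrals must be true (not merely local) $\mathbb{Q}^{\xi}$-martingales; the paper disposes of both points with a one-line appeal to BMO.

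There is, however, one concrete divergence that amounts to a gap in your write-up. The driver of \eqref{gradientbsde} contains the inhomogeneous term $D_t g(s,\Theta)$ (the Malliavin derivative of the driver in its explicit $\omega$-dependence), and this term is \emph{not} absorbed by the change of measure: after substituting $W^{\mathbb{Q}}$ and $\tilde{N}^{\mathbb{Q}}$ one is left with $\int_t^T D_t g(s,\Theta)\,ds$ in the drift, so the equation does not reduce to a pure martingale equation as you claim. The paper's own proof keeps this term and accordingly concludes the modified identity $D_tY(t)-\int_t^T D_tg(s,\Theta)\,ds=\mathbb{E}^{\mathbb{Q}^{\xi}}[-\eta_i\mid\mathcal{F}_t]$, which agrees with the theorem as stated only when $D_t g(s,\Theta)\equiv 0$ (e.g.\ a driver with no explicit randomness, as in the entropic example of Section 4). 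Your derivation silently discards this term; you should either add the hypothesis $D_t g(s,\Theta)=0$ or carry the correction term through to the conclusion. A secondary point, which both you and the paper elide, is that matching the jump drift of the linear BSDE with that produced by \eqref{girsanovtype} requires identifying $\partial_u g(s,\Theta)\,p(\zeta)\,\partial_v G(s,\Upsilon(\zeta))$ with the integrand $\partial_v g(s,\Theta)$ appearing in the stochastic exponential (and checking its sign); this identification deserves an explicit line.
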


\begin{proof}
Since $\partial_zg(t,\Theta)$ and $\partial_{\upsilon}g(t,\Theta)$ belong to BMO$(\mathbb{P})$, then the stochastic integrals in \eqref{girsanovtype} are said to be BMO$(\mathbb{P})$-martingales and the stochastic exponential is a true martingale \cite{morlais2009}. From \cite{delong201} Theorem 2.5.1, a new equivalent probability measure $\mathbb{Q}$ is defined by equation \eqref{girsanovtype}. Furthermore, the processes
\[W^{\mathbb{Q}}=W-\int_0^t\partial_zg(s,\Theta)ds\] and \[\tilde{N}^{\mathbb{Q}}(dt,d\zeta)=N(dt,d\zeta)+\partial_{\upsilon}g(t,\Theta)\nu(d\zeta)dt\]
are the $\mathbb{Q}$- Brownian motion and $\mathbb{Q}$-compensated random measure respectively.
We define a function $\Phi_i(t)$ by $\Phi_i(t):=\mathbb{E}^{\mathbb{Q}^{\xi}}[-\eta_i\mid\mathcal{F}_t]$ for each $i=1,\ldots,n$. Let $Z^{\eta_i}(t)$ and $\Upsilon^{\eta_i}(t,\zeta)$ be predictable processes, then from the martingale representation theorem we have,
\begin{eqnarray}\nonumber
  \Phi_i(t) &=& -\eta_i-\int_t^TZ^{\eta_i}(s)dW^{\mathbb{Q}}-\int_t^T\int_{\mathbb{R}_0}\Upsilon^{\eta_i}(s,\zeta)\tilde{N}^{\mathbb{Q}}(ds,d\zeta) \nonumber\\ \label{martingalephi}
   &=& -\eta_i -\int_t^TZ^{\eta_i}(s)dW+\int_t^TZ^{\eta_i}(s)\partial_zg(s,\Theta)ds-\int_t^T\int_{\mathbb{R}_0}\Upsilon^{\eta_i}(s,\zeta)\tilde{N}(ds,d\zeta) \nonumber\\
   &&-\int_t^T\int_{\mathbb{R}_0}\partial_{\upsilon}g(s,\Theta)\Upsilon^{\eta_i}(s,\zeta)\nu(d\zeta)ds\,.\nonumber
\end{eqnarray}
Comparing the above equation with the BSDE representing the gradient allocation \eqref{gradientbsde}, provided that \eqref{gradientbsde} has a unique solution, we can conclude that the dynamic gradient allocation has the representation
\begin{equation}
D_tY(t)-\int^T_tD_tg(s,\Theta)ds=\nabla_{\eta_i}\rho_t(\xi)=\mathbb{E}^{\mathbb{Q}^{\xi}}[-\eta_i\mid\mathcal{F}_t]\,, \ \ \ i=1,2,\ldots,n\,,
\end{equation}
where ${\mathbb{Q}^{\xi}}$ is from equation \eqref{girsanovtype}.
\end{proof}

From the above theorem, we can immediately obtain the representation result for BSDE based dynamic convex and dynamic coherent risk measures. The results of the representation of BSDE based dynamic convex and coherent risk measures are established from the full allocation property of the Aumann-Shapley allocation (the static case given in Equation \eqref{Aumann}) \cite{kromer2014}.

\begin{corollary}\label{Corollary32}
Let $\xi\in\mathbb{L}^{\infty}(\mathcal{F}_T)$. Suppose that $\ell$ is convex in $z$ and $\upsilon$ and $\partial_zg(t,Z^{\beta\xi}(t),\Upsilon^{\beta\xi}(t,\cdot))$, $\partial_{\upsilon}g(t,Z^{\beta\xi}(t),\Upsilon^{\beta\xi}(t,\cdot))$ belong to the class of $BMO(\mathbb{P})$, for any $\beta\in[0,1]$, where $Z^{\beta\xi}(t),\,\Upsilon^{\beta\xi}(t,\cdot)$ are the controls to the quadratic-exponential BSDE \eqref{BSDequation}, with terminal condition $\rho_{t,\beta}(\xi)=-\beta\xi$. Then, the corresponding quadratic-exponential BSDE-based dynamic convex risk measure can be represented by
$$
\rho_t(\xi)=\mathbb{E}[-\Lambda^{\xi}(T,t)\xi\mid\mathcal{F}_t]\,,
$$
where
\begin{equation}\label{Lambda}
\Lambda^{\xi}(T,t)=\int_0^1\frac{\mathcal{E}(M^{\beta\xi}(T))}{\mathcal{E}(M^{\beta\xi}(t))}d\beta\,, \ \ \ \forall t\in[0,T]\,,
\end{equation}
for $M^{\beta\xi}$ defined by
$$
M^{\beta\xi}(t)=\int_0^t\partial_zg(s,Z^{\beta\xi}(s),\Upsilon^{\beta\xi}(s,\zeta))dW(s) +\int_0^t\int_{\mathbb{R}_0}\partial_\upsilon g(s,Z^{\beta\xi}(s),\Upsilon^{\beta\xi}(s,\zeta))\tilde{N}(ds,d\zeta)\,.
$$
\end{corollary}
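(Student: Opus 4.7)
The strategy is to combine the gradient allocation representation from Theorem 3.1 with the Aumann--Shapley full allocation identity \eqref{Aumann}, applied in its single-subportfolio form, and then to invoke Fubini's theorem to move the $\beta$-integration inside the conditional expectation. Regarding $\xi$ itself as the only subportfolio and using the normalization $\rho_t(0)=0$ for a convex dynamic risk measure, one has the dynamic analogue
$$
\rho_t(\xi)=\int_0^1 \nabla_\xi \rho_t(\beta\xi)\, d\beta,
$$
which does not require positive homogeneity. First I would verify that $\rho_t$ is in fact a convex dynamic risk measure under the stated hypotheses: the generator \eqref{generator} decomposes as $\ell(t,z,\Upsilon)+\tfrac{1}{2}\alpha|z|^2+\tfrac{1}{\alpha}\int_{\mathbb{R}_0}(e^{\alpha\Upsilon}-1-\alpha\Upsilon)\nu(d\zeta)$; the quadratic and exponential terms are manifestly convex in $(z,\Upsilon)$, the driver $\ell$ is convex by hypothesis, and $g$ is independent of $y$, so convexity of the BSDE-induced risk measure follows.

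Next, I would apply Theorem 3.1 with terminal datum $\beta\xi$ in place of $\xi$ and with the single subportfolio $\eta_1=\xi$, to obtain
$$
\nabla_\xi \rho_t(\beta\xi)=\mathbb{E}^{\mathbb{Q}^{\beta\xi}}[-\xi \mid \mathcal{F}_t],
$$
where $d\mathbb{Q}^{\beta\xi}/d\mathbb{P}=\mathcal{E}(M^{\beta\xi})$. The assumption that $\partial_z g$ and $\partial_\upsilon g$ evaluated along $(Z^{\beta\xi},\Upsilon^{\beta\xi})$ lie in $BMO(\mathbb{P})$ for every $\beta\in[0,1]$, together with Kazamaki's criterion recalled in the preliminaries, guarantees that $\mathcal{E}(M^{\beta\xi})$ is a genuine $\mathbb{P}$-martingale, so the change of measure is legitimate. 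Bayes' rule then yields
$$
\nabla_\xi \rho_t(\beta\xi)=\mathbb{E}\left[-\xi\cdot\frac{\mathcal{E}(M^{\beta\xi}(T))}{\mathcal{E}(M^{\beta\xi}(t))}\,\bigg|\,\mathcal{F}_t\right],
$$
and substitution into the Aumann--Shapley identity followed by a Fubini exchange produces the claimed representation with $\Lambda^\xi(T,t)$ as defined in \eqref{Lambda}.

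The main obstacle is justifying the Fubini interchange. Two ingredients are needed: joint measurability of $(\beta,\omega)\mapsto \mathcal{E}(M^{\beta\xi}(T,\omega))/\mathcal{E}(M^{\beta\xi}(t,\omega))$, and an integrable uniform-in-$\beta$ bound on this ratio times $|\xi|$. Because $\xi\in\mathbb{L}^{\infty}(\mathcal{F}_T)$, the second point reduces to integrability of the density ratio, which follows from the reverse H\"older inequality enjoyed by stochastic exponentials of $BMO$-martingales with constants controlled by the uniform $BMO$-norm bound (cf.~Morlais \cite{morlais2009}). Measurability in $\beta$ follows from the continuity of $\beta\mapsto(Y^{\beta\xi},Z^{\beta\xi},\Upsilon^{\beta\xi})$ in the $\mathbb{S}^2\times\mathbb{H}^2_{BMO(W)}\times\mathbb{H}^2_{BMO(N)}$ norm, which is the standard stability counterpart of the quadratic-exponential BSDE well-posedness theorem of Fujii--Takahashi used in the preliminaries. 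Once these technicalities are dispatched, the chain of equalities above delivers the claimed representation.
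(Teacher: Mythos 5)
Your proposal is correct and follows essentially the same route as the paper's own proof: the Aumann--Shapley identity $\rho_t(\xi)=\int_0^1\nabla_\xi\rho_t(\beta\xi)\,d\beta$, followed by Theorem 3.1 applied at terminal datum $\beta\xi$, Bayes' rule for the conditional expectation under $\mathbb{Q}^{\beta\xi}$, and a Fubini exchange of the $\beta$-integral with the conditional expectation. You in fact supply more justification than the paper does for the Fubini step (reverse H\"older bounds from the uniform $BMO$ hypothesis and measurability in $\beta$), where the paper only remarks that boundedness of $\xi$ makes the integral finite.
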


\begin{proof}
Following Kromer and Overbeck \cite{kromer2014}, we consider the following
\begin{eqnarray*}
  \rho_t(\xi) &=& \rho_t(1\xi)-\rho_t(0\xi)=\int_0^1\frac{d}{d\beta}\rho_t(\beta\xi)d\beta \\
   &=& \int_0^1\lim_{\epsilon\rightarrow0}\frac{\rho_t((\beta+\epsilon)\xi)-\rho_t(\beta\xi)}{\epsilon}d\beta \\
   &=&  \int_0^1\nabla_\xi\rho_t(\beta\xi)d\beta\,.
\end{eqnarray*}
From the previous theorem, we have
\begin{equation}\label{dynamicriskbeta}
\rho_t(\xi)=\int_0^1\mathbb{E}^{\mathbb{Q}^{\beta\xi}}[-\xi\mid\mathcal{F}_t]d\beta\,.
\end{equation}
Then since $\xi\in\mathbb{L}^{\infty}(\mathcal{F}_T)$, $\mathbb{Q}^{\beta\xi}$ is an equivalent probability measure, $\forall\beta\in[0,1]$. Hence $\xi$ is $\mathbb{Q}^{\beta\xi}$-a.s. bounded. This implies that $\int_0^1\mathbb{E}^{\mathbb{Q}^{\beta\xi}}[-\xi\mid\mathcal{F}_t]d\beta<\infty$. Define $\Lambda^{\xi}(t)=\mathcal{E}(M^{\beta\xi}(t))$. Then, \eqref{dynamicriskbeta} can be written by
\begin{eqnarray*}
\rho_t(\xi)&=&\int_0^1\mathbb{E}^{\mathbb{Q}^{\beta\xi}}[-\xi\mid\mathcal{F}_t]d\beta = \int_0^1\frac{1}{\Lambda^{\beta\xi}(t)}\mathbb{E}^{\mathbb{P}^{\beta\xi}}[-\Lambda^{\beta\xi}(T)\xi\mid\mathcal{F}_t]d\beta\nonumber\\ &=&\mathbb{E}\Bigl[-\Bigl(\int_0^1\frac{\Lambda^{\beta\xi}(T)}{\Lambda^{\beta\xi}(t)}d\beta\Bigl)\xi\mid\mathcal{F}_t\Bigl]\\
&=&\mathbb{E}[-\Lambda^{\xi}(T,t)\xi\mid\mathcal{F}_t],\\
\end{eqnarray*}
which completes the proof.
\end{proof}
\begin{corollary}
		Let $\xi\in\mathbb{L}^{\infty}(\mathcal{F}_T)$. Suppose that $g$ is of the form $g(t,z,\upsilon)=\ell(t,z,\upsilon)$ is convex and positively homogeneous in both $z$ and $\upsilon$. Moreover, suppose that $\partial_z\ell(t,Z^{\beta\xi}(t),\Upsilon^{\beta\xi}(t,\cdot))$, $\partial_{\upsilon}\ell(t,Z^{\beta\xi}(t),\Upsilon^{\beta\xi}(t,\cdot))$ belong to the class of $BMO(\mathbb{P})$, for any $\beta\in[0,1]$. Then, the corresponding BSDE-based dynamic coherent risk measure can be represented by
		$$
		\rho_t(\xi)=\mathbb{E}^{\mathbb{Q}}[-\xi\mid\mathcal{F}_t]\,,
		$$
		where the $\mathbb{Q}$-measure is given by
		\begin{eqnarray}\label{Qmeasure}
		\frac{d\mathbb{Q}}{d\mathbb{P}}\bigg|_{\mathcal{F}_t}&=&\exp\bigg\{-\int^t_0\partial_z\ell(t,Z^{\xi}(t),\Upsilon^{\xi}(t,\zeta))dW-\frac{1}{2}\int^t_0\partial_z\ell(t,Z^{\xi}(t),\Upsilon^{\xi}(t,\zeta))^2ds\nonumber\\
		&&+\int^t_0\int_{\mathbb{R}_0}\bigg(\ln\big(1-\partial_{\upsilon}\ell(t,Z^{\xi}(t),\Upsilon^{\xi}(t,\zeta))\big)+\partial_{\upsilon}\ell(t,Z^{\xi}(t),\Upsilon^{\xi}(t,\zeta))\bigg)\nu(d\zeta)ds\nonumber\\
		&&+\int^t_0\int_{\mathbb{R}_0}\ln\big(1-\partial_{\upsilon}\ell(t,Z^{\xi}(t),\Upsilon^{\xi}(t,\zeta))\big)\tilde{N}(ds,d\zeta)\bigg\}.\,
		\end{eqnarray}
\end{corollary}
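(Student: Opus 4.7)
The plan is to reduce the coherent case to the convex case already handled in Corollary \ref{Corollary32}, and then exploit positive homogeneity to collapse the Aumann--Shapley integral to a single stochastic exponential. First I would invoke Corollary \ref{Corollary32}, whose hypotheses are satisfied here because every positively homogeneous function is in particular convex, so that
$$
\rho_t(\xi)=\mathbb{E}\Bigl[-\Lambda^{\xi}(T,t)\xi\,\big|\,\mathcal{F}_t\Bigl],\qquad \Lambda^{\xi}(T,t)=\int_0^1\frac{\mathcal{E}(M^{\beta\xi}(T))}{\mathcal{E}(M^{\beta\xi}(t))}d\beta.
$$
The strategy is then to prove that the integrand is in fact independent of $\beta$ for $\beta>0$.

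The key step is a scaling argument using positive homogeneity. Since $\ell(t,\beta z,\beta\upsilon)=\beta\ell(t,z,\upsilon)$ for all $\beta\geq 0$, if $(Y^{\xi},Z^{\xi},\Upsilon^{\xi})$ is the unique solution of \eqref{BSDequation} with terminal condition $-\xi$, then by linearity of the stochastic integrals the triple $(\beta Y^{\xi},\beta Z^{\xi},\beta \Upsilon^{\xi})$ satisfies the BSDE with terminal data $-\beta\xi$. The uniqueness part of the Fujii--Takahashi theorem then yields $Z^{\beta\xi}=\beta Z^{\xi}$ and $\Upsilon^{\beta\xi}=\beta\Upsilon^{\xi}$ for every $\beta>0$. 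Differentiating the homogeneity identity with respect to the scaling parameter shows that $\partial_z\ell$ and $\partial_{\upsilon}\ell$ are homogeneous of degree zero in $(z,\upsilon)$, so
$$
\partial_z\ell(t,Z^{\beta\xi}(t),\Upsilon^{\beta\xi}(t,\cdot))=\partial_z\ell(t,Z^{\xi}(t),\Upsilon^{\xi}(t,\cdot)),
$$
and similarly for $\partial_{\upsilon}\ell$. Consequently $M^{\beta\xi}(t)=M^{\xi}(t)$ for all $\beta>0$, which immediately collapses the integral defining $\Lambda^{\xi}(T,t)$ to the single quotient $\mathcal{E}(M^{\xi}(T))/\mathcal{E}(M^{\xi}(t))$.

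Next I would introduce the measure $\mathbb{Q}$ via $d\mathbb{Q}/d\mathbb{P}=\mathcal{E}(M^{\xi}(T))$. The BMO hypothesis on $\partial_z\ell$ and $\partial_{\upsilon}\ell$, combined with Assumption \ref{assumption4.1} which bounds the jump sizes away from $-1$, allows an application of Kazamaki's criterion (exactly as cited from Morlais \cite{morlais2009} in the Preliminaries) to conclude that $\mathcal{E}(M^{\xi})$ is a true martingale and hence $\mathbb{Q}$ is a bona fide equivalent probability measure. Combining this with the previous paragraph gives
$$
\rho_t(\xi)=\mathbb{E}\Bigl[-\frac{\mathcal{E}(M^{\xi}(T))}{\mathcal{E}(M^{\xi}(t))}\xi\,\Big|\,\mathcal{F}_t\Bigl]=\mathbb{E}^{\mathbb{Q}}[-\xi\mid\mathcal{F}_t].
$$

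Finally, the explicit formula \eqref{Qmeasure} follows by spelling out the Dol\'eans--Dade exponential of a jump-diffusion local martingale,
$$
\mathcal{E}(M^{\xi}(t))=\exp\Bigl\{M^{\xi,c}(t)-\tfrac12\langle M^{\xi,c}\rangle(t)\Bigl\}\prod_{0<s\leq t}(1+\Delta M^{\xi}(s))e^{-\Delta M^{\xi}(s)},
$$
rewriting the jump product as a Poisson integral of $\ln(1+\Delta M^{\xi})-\Delta M^{\xi}$ and compensating with $\nu(d\zeta)ds$ to split it into a $\tilde{N}$-integral and a drift. The sign conventions in \eqref{Qmeasure} are obtained by viewing the exponent as the exponential of the local martingale associated with $-\partial_z\ell$ and $-\partial_{\upsilon}\ell$. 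The main obstacle is the rigorous justification of the scaling identity $Z^{\beta\xi}=\beta Z^{\xi}$ under the quadratic-exponential growth regime of Fujii--Takahashi, since uniqueness is only guaranteed within the bounded solution class $\mathbb{S}^{2}\times\mathbb{H}^{2}_{BMO(W)}\times\mathbb{H}^{2}_{BMO(N)}$; the argument requires checking that $(\beta Y^{\xi},\beta Z^{\xi},\beta\Upsilon^{\xi})$ remains in this class, which reduces to $\beta\in[0,1]$ together with the bound $\|\xi\|_{\infty}<\infty$. Once this is established, the remaining steps are routine calculations.
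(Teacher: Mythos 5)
Your proposal follows essentially the same route as the paper: invoke Corollary \ref{Corollary32}, use positive homogeneity of $\ell$ together with uniqueness of the BSDE solution to get $Z^{\beta\xi}=\beta Z^{\xi}$ and $\Upsilon^{\beta\xi}=\beta\Upsilon^{\xi}$ (the paper cites Delong's Proposition 6.2.3(b) for this scaling, where you argue it directly), then use degree-zero homogeneity of $\partial_z\ell$ and $\partial_{\upsilon}\ell$ to collapse the Aumann--Shapley integral to the single quotient $\mathcal{E}(M^{\xi}(T))/\mathcal{E}(M^{\xi}(t))$ and identify the resulting measure change. Your additional attention to the uniqueness class for the scaled solution and to Kazamaki's criterion is a reasonable tightening of details the paper leaves implicit, but it does not change the argument.
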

\begin{proof}
	From Corollary \ref{Corollary32}, we have the following representation
	$$
	\rho_t(\xi)=\mathbb{E}[-\Lambda^{\xi}(T,t)\xi\mid\mathcal{F}_t],
	$$
	with $\Lambda$ defined in \eqref{Lambda}. Given that $g(t,z,\upsilon)=\ell(t,z,\upsilon)$ and $\ell$ is convex and positively homogeneous, this implies that the corresponding BSDE-based dynamic risk measure $\rho(\cdot)$ satisfies
	$$
	Y^{\beta\xi}(t)=\rho_t(\beta \xi)=\beta\rho_t(\xi)=\beta Y^{\xi}(t) \quad dt\otimes d\mathbb{P}-a.s.
	$$
	for $c>0$ and $0\leq t \leq T$. We show this by considering two BSDEs given by
	\begin{eqnarray*}
	 Y^{\beta \xi}(t) &=& -\beta\xi-\int_t^T Z^{\beta \xi}(s)dW(s)-\int_t^T\int_{{\mathbb{R}_0}}\Upsilon^{\beta \xi}(s,\zeta)\tilde{N}(ds,d\zeta)\nonumber \\
	 && +\int_t^Tg\big(s,Z^{\beta \xi}(s),\Upsilon^{\beta \xi}(s,\zeta))\big)ds\,,
	 \end{eqnarray*}
	 and
	 \begin{eqnarray*}
	 Y^{\xi}(t) &=& -\xi-\int_t^T Z^{\xi}(s)dW(s)-\int_t^T\int_{{\mathbb{R}_0}}\Upsilon^{\xi}(s,\zeta)\tilde{N}(ds,d\zeta)\nonumber \\
	 && +\int_t^Tg\big(s,Z^{\xi}(s),\Upsilon^{\xi}(s,\zeta))\big)ds\,.
	 \end{eqnarray*}
	Then, from the proof of Proposition 6.2.3(b) in Delong \cite{delong201} we conclude that $Y^{\beta \xi}(t)=\beta Y^{\xi}(t)$, $Z^{\beta \xi}(t)=\beta Z^{\xi}(t)$ and $\Upsilon^{\beta \xi}(t,\zeta)=\beta \Upsilon^{\xi}(t,\zeta)$. \\
	
The above results imply that for the representation of the BSDE coherent risk measure,  the process $\mathcal{E}(M^{\beta \xi}(t))(\cdot)$ appearing in \eqref{Lambda} becomes
\[
	\mathcal{E}\bigg(\int_0^t\partial_zg(s,Z^{\beta\xi}(s),\Upsilon^{\beta\xi}(s,\zeta))dW(s) +\int_0^t\int_{\mathbb{R}_0}\partial_\upsilon g(s,Z^{\beta\xi}(s),\Upsilon^{\beta\xi}(s,\zeta))\tilde{N}(ds,d\zeta)\bigg)(t)\]
	\begin{eqnarray*}
	&=&\exp\bigg\{-\int^t_0\partial_zg(t,\beta Z^{\xi}(t),\beta \Upsilon^{\xi}(t,\zeta))dW-\frac{1}{2}\int^t_0\partial_zg(t,\beta Z^{\xi}(t),\beta \Upsilon^{\xi}(t,\zeta))^2ds\nonumber\\
	&&+\int^t_0\int_{\mathbb{R}_0}\bigg(\ln\big(1-\partial_{\upsilon}g(t,\beta Z^{\xi}(t),\beta \Upsilon^{\xi}(t,\zeta))\big)+\partial_{\upsilon}g(t,\beta Z^{\xi}(t),\beta \Upsilon^{\xi}(t,\zeta))\bigg)\nu(d\zeta)ds\nonumber\\
	&&+\int^t_0\int_{\mathbb{R}_0}\ln\big(1-\partial_{ \upsilon}g(t,\beta Z^{\xi}(t),\beta \Upsilon^{\xi}(t,\zeta))\big)\tilde{N}(ds,d\zeta)\bigg\}\,,\nonumber\\
	&=&\exp\bigg\{-\int^t_0\partial_zg(t,Z^{\xi}(t),\Upsilon^{\xi}(t,\zeta))dW-\frac{1}{2}\int^t_0\partial_zg(t,Z^{\xi}(t),\Upsilon^{\xi}(t,\zeta))^2ds\nonumber\\
	&&+\int^t_0\int_{\mathbb{R}_0}\bigg(\ln\big(1-\partial_{\upsilon}g(t,Z^{\xi}(t),\Upsilon^{\xi}(t,\zeta))\big)+\partial_{\upsilon}g(t,Z^{\xi}(t),\Upsilon^{\xi}(t,\zeta))\bigg)\nu(d\zeta)ds\nonumber\\
	&&+\int^t_0\int_{\mathbb{R}_0}\ln\big(1-\partial_{\upsilon}g(t,Z^{\xi}(t),\Upsilon^{\xi}(t,\zeta))\big)\tilde{N}(ds,d\zeta)\bigg\}\,\nonumber\\
	&=&	\mathcal{E}\bigg(\int_0^t\partial_zg(s,Z^{\xi}(s),\Upsilon^{\xi}(s,\zeta))dW(s) +\int_0^t\int_{\mathbb{R}_0}\partial_\upsilon g(s,Z^{\xi}(s),\Upsilon^{\xi}(s,\zeta))\tilde{N}(ds,d\zeta)\bigg)(t)\nonumber\\
	&=&\mathcal{E}(M^{\xi}(t))\,,\\
	\end{eqnarray*}
	because of the positive homogeneity of $g$ in $z$ and $\Upsilon$. In the case of dynamic coherent risk measure, $\Lambda^{\xi}(T,t)$ is given by
	\begin{equation}\label{exponentialM}
		\Lambda^{\xi}(T,t)=\int_0^1\frac{\mathcal{E}(M^{\beta\xi}(T))}{\mathcal{E}(M^{\beta\xi}(t))}d\beta=\frac{\mathcal{E}(M^{\xi}(T))}{\mathcal{E}(M^{\xi}(t))}.
	\end{equation}
	Hence, the BSDE-based coherent risk measure is given by
		$$
		\rho_t(\xi)=\mathbb{E}^{\mathbb{Q}}[-\xi\mid\mathcal{F}_t]\,,
		$$
		where the $\mathbb{Q}$-measure is defined in \eqref{Qmeasure}.\\
\end{proof}
We obtain similar results as Kromer and Overbeck \cite{kromer2014} were the exponential martingale of the BSDE based convex risk measure is dependent on all portfolio weights $\beta\in[0,1]$. The representation of the coherent risk measure is dependent only on $\beta=1$. The difference between these two risk representations is emphasized in Equation \eqref{exponentialM}.
\section{Example}
In this section we apply the methods presented early to dynamic entropic risk measure and static coherent entropic risk measures.
\begin{example}
We consider the well known dynamic entropic risk measure given by
$$
\rho_{t}(\xi)=\frac{1}{\gamma}\ln\mathbb{E}\Bigl[e^{-\gamma\xi}\mid\mathcal{F}_t\Bigl]\,, \ \ \ \gamma>0, \,t\in[0,T]\,.
$$
This example was also considered in \cite{kromer2014}. It has been proved that the above entropic measure is a unique solution of the so called canonical quadratic-exponential BSDE $(g,\xi)$ of the form (See EL Karoui {\it et. al.} \cite{karoui2016})
\begin{eqnarray*}
  \rho_{t}(\xi) &=& -\xi+\int_t^T\Bigl(\frac{\gamma}{2}|Z^\xi(s)|^2+\frac{1}{\gamma}\int_{\mathbb{R}_0} \Bigl(\exp(\Upsilon^\xi(s,\zeta))-\gamma\Upsilon^\xi(s,\zeta)-1\Bigl)\nu(d\zeta)\Bigl)ds \\
   && -\int_t^TZ^\xi(s)dW(s)-\int_t^T\int_{\mathbb{R}_0}\Upsilon^\xi(s,\zeta)\tilde{N}(ds,d\zeta)\,.
\end{eqnarray*}
Note that the generator is given by
$$
g(t,Z,\Upsilon(\zeta))= \frac{\gamma}{2}|Z|^2+\frac{1}{\gamma}\int_{\mathbb{R}_0} \Bigl(\exp(\Upsilon(\zeta))-\gamma\Upsilon(\zeta)-1\Bigl)\nu(d\zeta)\,.
$$
Then we have $\partial_zg(t,Z,\Upsilon(\zeta))=\gamma Z$, $\partial_\upsilon g(t,Z,\Upsilon(\zeta))= \int_{\mathbb{R}_0} \Bigl(\exp(\Upsilon(\zeta))-1\Bigl)\nu(d\zeta)$.\\

Suppose that $\xi$ is from a class of smooth functions such that $\|\xi\|_{1,2}$ exists and is finite. We define a function $\varphi(\xi)=e^{-\gamma\xi}$. Then from the boundedness of $\xi$, we have that $\varphi(\xi)$ is Malliavin differentiable and the generalized Clark-Ocone formula (Di Nunno  et al. \cite{di2009}, Theorem 12.20)
\begin{eqnarray*}
  e^{-\gamma\beta\xi} &=& \mathbb{E}[e^{-\gamma\beta\xi}] +\int_0^T\mathbb{E}[D_t(e^{-\gamma\beta\xi})\mid\mathcal{F}_t]dW(t) +\int_0^T\int_{\mathbb{R}_0}\mathbb{E}[D_{t,\zeta}(e^{-\gamma\beta\xi})\mid\mathcal{F}_t]\tilde{N}(dt,d\zeta)\,.
\end{eqnarray*}
Define
$
\Gamma^{\beta\xi}(t):= \mathbb{E}[e^{-\gamma\beta\xi}\mid\mathcal{F}_t]\,.
$
Then
\begin{eqnarray*}
  \Gamma^{\beta\xi}(t) &=& \Gamma^{\beta\xi}(0) +\int_0^t\mathbb{E}[-\gamma\beta e^{-\gamma\beta\xi} D_s(\xi)\mid\mathcal{F}_s]dW(s) \\
  && +\int_0^T\int_{\mathbb{R}_0}\mathbb{E}[-\gamma\beta e^{-\gamma\beta\xi} D_{s,\zeta}(\xi)\mid\mathcal{F}_s]\tilde{N}(ds,d\zeta) \\
   &=&  \Gamma^{\beta\xi}(0) +\gamma\int_0^t\Gamma^{\beta\xi}(s)Z^{\beta\xi}(s)dW(s) +\gamma\int_0^t\int_{\mathbb{R}_0}\Gamma^{\beta\xi}(s)\Upsilon^{\beta\xi}(s,\zeta)\tilde{N}(ds,d\zeta)\,,
\end{eqnarray*}
where
$$
Z^{\beta\xi}(t)=\frac{-\beta\mathbb{E}[ e^{-\gamma\beta\xi} D_t(\xi)\mid\mathcal{F}_t]}{ \mathbb{E}[e^{-\gamma\beta\xi}\mid\mathcal{F}_t]} \ \ \ \rm{and} \ \ \ \Upsilon^{\beta\xi}(s,\zeta)=\frac{-\beta\mathbb{E}[ e^{-\gamma\beta\xi} D_{t,\zeta}(\xi)\mid\mathcal{F}_t]}{ \mathbb{E}[e^{-\gamma\beta\xi}\mid\mathcal{F}_t]}\,.
$$
Therefore, $\Gamma^{\beta\xi}(t)$ satisfies the following
\begin{eqnarray*}
  \Gamma^{\beta\xi}(t) &=& \Gamma^{\beta\xi}(0)\exp\Bigl\{\gamma\int_0^tZ^{\beta\xi}(s)dW(s) -\frac{\gamma^2}{2}\int_0^t|Z^{\beta\xi}(s)|^2ds \\
   && \ \ \ \ \ \ \ \ \ \ \ \ \ \   +\int_0^t\int_{\mathbb{R}_0}[\ln(1+\gamma\Upsilon^{\beta\xi}(s,\zeta))-\gamma\Upsilon^{\beta\xi}(s,\zeta)]\nu(d\zeta)dt \\
   && \ \ \ \ \ \ \ \ \ \ \ \ \ \ +\int_0^t\int_{\mathbb{R}_0}\ln(1+\gamma\Upsilon^{\beta\xi}(s,\zeta))\tilde{N}(ds,d\zeta)\Bigl\}\,.
\end{eqnarray*}
Hence, the process $\Gamma^{\beta\xi}(t)/\Gamma^{\beta\xi}(0)$ corresponds to the stochastic exponential $\mathcal{E}$ to the process $M^{\beta\xi}(t)$ defined by
$$
M^{\beta\xi}(t)=\int_0^t\partial_zg(s,Z^{\beta\xi}(s),\Upsilon^{\beta\xi}(s,\zeta))dW(s) +\int_0^t\int_{\mathbb{R}_0}\partial_\upsilon g(s,Z^{\beta\xi}(s),\Upsilon^{\beta\xi}(s,\zeta))\tilde{N}(ds,dz)\,,
$$
for $t\in[0,T]$.\\
\end{example}

\begin{example}
In the second example we consider the static entropic coherent risk measure at level $c$ defined by Follmer and Knispel in Definition 3.1 \cite{follmer2011} as follow
\begin{equation}\label{coherententropic}
 \rho(\xi)=\inf_{\gamma >0}\Bigl(\frac{c}{\gamma}+\frac{1}{\gamma}\ln\mathbb{E}\Bigl[e^{-\gamma\xi}\Bigl]\Bigl)
\end{equation}
for $c>0$. From Proposition 3.1 in Follmer and Knispel \cite{follmer2011} there exists a unique $\gamma_c>0$ such that $c=\mathbb{E}^{\mathbb{Q}}[\int\frac{d\mathbb{Q}}{d\mathbb{P}}\ln(\frac{d\mathbb{Q}}{d\mathbb{P}})]$ and the infimum of Equation \eqref{coherententropic} is attained, i.e.
$$
 \rho(\xi)=\frac{c}{\gamma_c}+\frac{1}{\gamma_c}\ln\mathbb{E}\Bigl[e^{-\gamma_c\xi}\Bigl].
$$
The G\^{a}teaux-differentiable of $\rho$ is given by

\begin{equation}
 \nabla\rho(\beta\xi)=-\frac{e^{-\gamma_c\beta \xi}}{\mathbb{E}[e^{-\gamma_c\beta \xi}]}\,.
\end{equation}
Since the entropic coherent risk measure satisfies the property of positively homogeneity \cite{follmer2011}. Then $\nabla\rho(\beta\xi)=\nabla\rho(\xi)$ and $\Lambda^{\beta\xi}$ for this case will be given by

$$
\Lambda^{\beta\xi}=\int_0^1\nabla\rho(\xi)d\beta=-\int_0^1\frac{e^{-\gamma_c \xi}}{\mathbb{E}[e^{-\gamma_c \xi}]}d\beta,
$$
and therefore $\rho$ from Equation \eqref{coherententropic} can be represented by
$$
\rho(\xi)=\mathbb{E}\Bigl[-\bigg(\int_0^1\frac{e^{-\gamma_c \xi}}{\mathbb{E}[e^{-\gamma_c \xi}]}d\beta\bigg) \xi\Bigl]\,.
$$\\
\end{example}
\section*{acknowledgements}
We would like to thank the following sponsors University of Pretoria, Simons Africa, NRF, TDG-NCP, DST-Risk and the MCTESTP Mozambique for their support.
	\appendix
	\section{}
	In this appendix we recall from Di Nunno et al. \cite{di2009} the Clark-Ocone formula and the chain rule in the Brownian and Poisson probability space $(\Omega,\mathcal{F},\mathbb{P})$.
	\begin{theorem}
		Let $F\in \mathbb{D}^{1,2}$. Then
		$$
			F=\mathbb{E}[F]	+ \int_0^T\mathbb{E}[D_tF|\mathcal{F}_t]dW(t)+\int_0^T\int_{\mathbb{R}_0}\mathbb{E}[D_{t,\zeta}F|\mathcal{F}_t]\tilde{N}(ds,d\zeta).	
		$$
		
	\end{theorem}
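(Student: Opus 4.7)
The plan is to prove the Clark-Ocone formula for the Brownian-Poisson probability space by combining the martingale representation theorem with the commutation properties of the Malliavin derivatives $D_t, D_{t,\zeta}$ with Itô-type stochastic integrals, following the classical Lévy-process strategy developed in Di Nunno et al.

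First, by the martingale representation theorem for the augmented filtration generated by $W$ and $\tilde N$, every $F \in L^2(\mathcal{F}_T)$ admits a unique decomposition
\[
F = \mathbb{E}[F] + \int_0^T \phi(s)\,dW(s) + \int_0^T\!\!\int_{\mathbb{R}_0} \psi(s,\zeta)\,\tilde{N}(ds,d\zeta),
\]
for predictable integrands $\phi \in \mathbb{H}_W^2(\mathbb{R})$ and $\psi \in \mathbb{H}_N^2(\mathbb{R})$. The theorem reduces to the identification $\phi(t) = \mathbb{E}[D_tF \mid \mathcal{F}_t]$ and $\psi(t,\zeta) = \mathbb{E}[D_{t,\zeta}F \mid \mathcal{F}_t]$.

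Second, I would establish this identification by a density argument. On the class of smooth cylindrical functionals (equivalently, finite linear combinations of iterated chaos integrals $I_n(f_n)$), the Malliavin derivatives act explicitly on the chaos kernels, and the formula can be verified directly using the orthogonality of multiple integrals and the explicit action of $D_t, D_{t,\zeta}$ on $I_n(f_n)$. Since the smooth functionals are dense in $\mathbb{D}^{1,2}$, and since both sides of the asserted identity are continuous linear operators from $\mathbb{D}^{1,2}$ into $L^2(\Omega)$—here one uses closability of the Malliavin derivatives together with the Itô isometries on the right-hand side—the identity extends to all $F \in \mathbb{D}^{1,2}$. Alternatively, a direct proof proceeds by applying $D_t$ (resp. $D_{t,\zeta}$) termwise to the representation, using the commutation
\[
D_t\!\int_0^T \phi(s)\,dW(s) = \phi(t) + \int_t^T D_t\phi(s)\,dW(s),
\]
and its Poisson analogue, and then taking $\mathbb{E}[\,\cdot\mid \mathcal{F}_t]$, which annihilates the stochastic integrals over $(t,T]$ by the martingale property and leaves $\phi(t)$ (resp. $\psi(t,\zeta)$) by predictability.

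The main technical obstacle will be justifying the Malliavin-Itô commutation uniformly in the integrands, because the predictable processes $\phi, \psi$ produced by the martingale representation are not a priori Malliavin smooth. The standard remedy is to first verify the commutation for elementary (step) integrands by a direct product-rule computation, and then pass to the limit by approximating $\phi$ and $\psi$ in the appropriate norms controlled by the Itô and Malliavin isometries; the closability of $D_t, D_{t,\zeta}$ then gives the required commutation for all sufficiently regular integrands. Once this is pinned down, the Clark-Ocone formula follows immediately by taking the conditional expectation with respect to $\mathcal{F}_t$ and invoking predictability.
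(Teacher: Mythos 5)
The paper does not actually prove this statement: it appears in the appendix purely as a recalled result, quoted from Di Nunno, {\O}ksendal and Proske \cite{di2009} without argument, so there is no in-paper proof to compare against. Your sketch reproduces the standard proof from that reference and its relatives: martingale representation for the Brownian--Poissonian filtration, followed by identification of the integrands as $\mathbb{E}[D_tF\mid\mathcal{F}_t]$ and $\mathbb{E}[D_{t,\zeta}F\mid\mathcal{F}_t]$ via the chaos expansion on a dense class and a continuity/closability argument. That outline is correct, and you rightly flag the weak point of the ``alternative'' direct route: the integrands $\phi,\psi$ produced by martingale representation are only predictable and square-integrable, so one cannot apply $D_t$ or $D_{t,\zeta}$ to the representation termwise without additional regularity (membership in $\mathbb{L}^{1,2}$); the chaos-expansion/density route avoids this entirely and is the one to rely on. Two small points to tighten if you write this out in full: (i) the continuity of the right-hand side as a map $\mathbb{D}^{1,2}\to L^2(\Omega)$ should be stated with the It\^o isometries and Jensen's inequality explicitly, since that is what lets the identity pass from linear combinations of iterated integrals $I_n(f_n)$ to general $F\in\mathbb{D}^{1,2}$; and (ii) in the jump component you must fix the convention for $D_{t,\zeta}$ consistently with the weighted norm $\|F\|_{1,2}$ used in the paper (which carries a factor $\zeta^2\nu(d\zeta)$), as the precise form of the derivative's action on $I_n(f_n)$, and hence the identification of $\psi(t,\zeta)$, depends on that convention.
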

	\begin{theorem}[Chain Rule]
			Let $F=F_1,\ldots,F_m\in \mathbb{D}_{1,2}$, and let $\phi:\mathbb{R}^m\longrightarrow \mathbb{R}$ be a bounded continuously differentiable. Then
			$$
			D_{t,\zeta}\varphi(F)=\varphi(F +D_{t,\zeta}F)-\varphi(F).
			$$
		 $$
		 D_t\phi (F)=\sum_{j=1}^m\frac{\partial}{\partial x}\phi(F_1,\ldots,F_m)D_tF_j\qquad dt\times d\mathbb {P} - a.e.
		 $$
	\end{theorem}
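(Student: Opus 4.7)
The plan is to handle the two identities separately, since the Brownian derivative $D_t$ and the Poisson derivative $D_{t,\zeta}$ are objects of genuinely different analytic nature: $D_t$ is a differential operator on Wiener space, while $D_{t,\zeta}$ is a difference (add-a-jump) operator on configuration space. Both assertions follow by first establishing the identity on a dense subclass of smooth cylindrical functionals and then extending by closability of the respective derivative in $\mathbb{D}^{1,2}$.

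For the Poisson identity, I would start from the defining representation: on the class of cylindrical Poisson functionals $F=f(\tilde N(A_1),\ldots,\tilde N(A_n))$ with $f$ bounded and Borel, Picard's add-a-point formula gives
\[
D_{t,\zeta}F=f\bigl(\tilde N(A_1)+\mathbf{1}_{A_1}(t,\zeta),\ldots\bigr)-f\bigl(\tilde N(A_1),\ldots\bigr).
\]
Applied to the composition $\varphi\circ F$ (which is itself cylindrical whenever $F$ is) this yields, purely algebraically,
\[
D_{t,\zeta}\varphi(F)=\varphi(F+D_{t,\zeta}F)-\varphi(F).
\]
The multivariate version for $F=(F_1,\ldots,F_m)$ follows component-wise. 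Since such cylindrical functionals are dense in $\mathbb{D}^{1,2}$ and $D_{t,\zeta}$ is closable, I would pass to general $F\in\mathbb{D}^{1,2}$ by approximation, with the boundedness of $\varphi$ ensuring $L^2$-convergence of both sides.

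For the Brownian identity, I would argue via smooth cylindrical approximation in the classical style. If $F_j=f_j(W(h_1),\ldots,W(h_{n_j}))$ with $f_j\in C_b^\infty$ and $h_i\in L^2([0,T])$, then by construction $D_tF_j=\sum_i\partial_if_j(\ldots)h_i(t)$, and the ordinary finite-dimensional chain rule immediately gives
\[
D_t\phi(F_1,\ldots,F_m)=\sum_{j=1}^m\partial_j\phi(F_1,\ldots,F_m)D_tF_j.
\]
To extend to $F\in(\mathbb{D}^{1,2})^m$, I pick cylindrical $F_j^{(k)}\to F_j$ in $\mathbb{D}^{1,2}$, note that $\phi(F^{(k)})\to\phi(F)$ in $L^2(\Omega)$ since $\phi$ is bounded, and that the right-hand side converges in $L^2([0,T]\times\Omega)$ because $\partial_j\phi$ is bounded on the (a.s.\ bounded in limit) range of $F^{(k)}$ and $D_tF_j^{(k)}\to D_tF_j$ in $L^2([0,T]\times\Omega)$. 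Closability of $D$ then transfers the identity from the approximants to $F$.

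The only genuinely technical step is this last passage to the limit in the Wiener part, and it is routine under the stated hypothesis that $\phi$ is bounded and continuously differentiable (whose derivatives are therefore bounded on the relevant compact sets); the Poisson part requires no regularity of $\varphi$ at all beyond measurability, because the claim there is an algebraic difference, not a differential identity. For this reason I expect no serious obstacle, and the whole argument amounts to assembling the standard Wiener-chaos and Poisson-chaos calculus already developed in Di Nunno, {\O}ksendal and Proske~\cite{di2009}, to which the statement is explicitly attributed.
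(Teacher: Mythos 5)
The paper does not prove this statement at all: it appears in the appendix purely as a recalled result, quoted from Di Nunno, {\O}ksendal and Proske \cite{di2009}, so there is no internal proof to compare against. Your sketch is essentially the standard argument from that reference (smooth/cylindrical functionals, the finite-dimensional chain rule resp.\ the add-a-point formula, then closability of $D_t$ and $D_{t,\zeta}$ on $\mathbb{D}^{1,2}$), and as a reconstruction of the omitted proof it is sound in outline.

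Two of your side remarks overreach, though neither is fatal to the stated theorem. First, the claim that the Poisson identity ``requires no regularity of $\varphi$ beyond measurability'' is wrong as stated: the identity is algebraic only on cylindrical functionals, and the extension to general $F\in\mathbb{D}^{1,2}$ goes through the closability/approximation step, which needs $\varphi(F^{(k)})\to\varphi(F)$ and $\varphi(F^{(k)}+D_{t,\zeta}F^{(k)})\to\varphi(F+D_{t,\zeta}F)$; for that you need (at least a.e.) continuity of $\varphi$ together with an integrability condition such as $\varphi(F+D_{t,\zeta}F)\in L^2$ --- this is exactly how the hypothesis is phrased in \cite{di2009}. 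Second, in the Wiener part you justify the $L^2$ convergence of $\sum_j\partial_j\phi(F^{(k)})D_tF_j^{(k)}$ by saying the derivatives of $\phi$ are ``bounded on the relevant compact sets''; but boundedness of $\phi$ plus continuous differentiability does not give bounded partial derivatives, and the range of the approximants is not contained in a fixed compact set. The hypothesis must be read as $\phi\in C^1_b$ (bounded with bounded first derivatives), or one should replace $\phi$ by Lipschitz truncations; under that reading your dominated-convergence passage is correct.
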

	\clearpage

\end{document}